\newtheorem{theorem}{Theorem}
\newtheorem{corollary}{Corollary}
\newtheorem{lemma}{Lemma}
\newtheorem{algo}{Algorithm}
\newcommand{\age}{\Delta}
\newcommand{\negfigspace}{\vspace{-2mm}}
\newcommand{\negspacesmall}{\vspace{-1em}}
\newcommand{\paoi}{\Delta^P} 
\def\BState{\State\hskip-\ALG@thistlm}
\newcommand{\QueueR}[2]{\draw[thick] (#1) ++(-0.7,0.3) --  ++(0.7,0) -- ++(0,-0.6) --  ++(-0.7,0);
\foreach \i in {1,...,3}
	\draw[thick] (#1) ++(-\i*0.2,0.3) -- ++(0,-0.6);
\coordinate (#2) at ($(#1) + (-0.7,0)$);}
\newcommand{\blen}{N} 
\newcommand{\enc}{\mathcal{E}}
\newcommand{\dec}{\mathcal{D}}
\newcommand{\Xset}{\mathcal{X}}
\newcommand{\ns}{\epsilon} 
\begin{document}

\title{Timely Lossless Source Coding for Randomly Arriving Symbols}
\author{Jing Zhong, Roy D.~Yates and Emina Soljanin \\
\small Department of ECE, Rutgers University, \{jing.zhong, ryates, emina.soljanin\}@rutgers.edu}

\maketitle

\begin{abstract}
\boldmath 
We consider a real-time streaming source coding system in which an encoder observes a sequence of randomly arriving symbols from an i.i.d. source, and feeds binary codewords to a FIFO buffer that outputs one bit per time unit to a decoder.
Each source symbol represents a status update by the source, and the timeliness of the system is quantified by the age of information (AoI), defined as the time difference between the present time and the generation time of the most up-to-date symbol at the output of the decoder.
When the FIFO buffer is allowed to be empty, we propose an optimal prefix-free lossless coding scheme that minimizes the average peak age based on the analysis of discrete-time Geo/G/1 queue.
For more practical scenarios in which a special codeword is reserved for indicating an empty buffer, we propose an encoding scheme that assigns a codeword to the empty buffer state based on an estimate of the buffer idle time.

\end{abstract}

\section{Introduction}

Many ubiquitous computing applications share a common need: the information update from the source has to be available at the interested receivers as quickly as possible.
A recently developed timeliness metric, the \emph{age of information} (AoI), quantifies the information freshness of status updating systems \cite{Kaul2012infocom, Costa2014,Huang2015,Sun2016,Najm2017,Bedewy2016,Kadota2016,Yates2017}.
More specifically, age measures the time difference between now and when the most recent update was generated.
If the receiver receives an update at some time $t$, and an update was generated at time $u(t)$, then the instantaneous age at the receiver is $t-u(t)$.


Real-time communication systems, such as live video streaming and information update in vehicular networks, often require efficient compression that enables the receiver to reconstruct the source message in a timely manner under limited network resources.
The analysis of these systems can be simplified to a real-time compression problem over a constrained data network.
In this work, we restrict our attention to the following baseline problem: if every update by the source is transmitted to the receiver through a binary channel with a fixed rate, what is optimal compression scheme that keeps the information about the source at the receiver as timely as possible?
This problem is different from the traditional source coding  that focuses on minimizing the average codeword length in order to approach the Shannon entropy of the source.
 
The delay of streaming source coding has been studied in different contexts. 
The end-to-end delay of streaming source coding was first studied in \cite{Humblet1978}. Here, source symbols arrive as a Poisson process, and the encoder maps them into binary codewords and puts them in a finite size buffer that outputs one bit per time unit.
A variant of the Huffman code was proposed to minimize the probability of buffer overflow.
A similar problem was studied in \cite{Chang2006}, in which source symbols arrive at the encoder sequentially one per time unit, and the receiver is required to reconstruct the source with a fixed end-to-end delay constraint.
It is necessary to distinguish our timeliness requirement from measuring the end-to-end delay in \cite{Humblet1978} and \cite{Chang2006}, since the age is a process that captures how old the information about the source is at the receiver.

Our prior work \cite{Zhong2016} applied age analysis to a streaming source coding system with a deterministic source symbol inter-arrival times. 
We assumed that a prefix-free fixed-to-variable encoder maps every block of $\blen$ symbols to a binary codeword that is sent through a bit pipe that outputs $R$ bits per time unit.
We observed that the encoder must choose an appropriate blocklength $\blen$ to balance data compression delays against network congestion deriving from insufficient compression.
Given a blocklength $\blen$, we proposed a coding scheme to optimize average age.
In \cite{Zhong2017},  age analysis was extended to a backlog-adaptive source coding model that makes the busy/idle state at the channel interface available at the source encoder. 
This enables the encoder to adjust the blocklength $\blen$ based on the state of the channel.
In \cite{Mayekar2018}, each source symbol represents a timely update message sent by the source, but the symbols that arrive at the encoder while the channel is busy are skipped.
An optimal Shannon code was proposed to minimize the average age of the freshest source symbol at the receiver.

\begin{figure*}[t]
\centering\small
\begin{tikzpicture}[node distance=1cm]
\node [draw,circle, rounded corners,align=center,minimum height=15mm,thick] (newsource) {Source};
\node[draw,rectangle,rounded corners,align=center,minimum height=12mm,thick] (encoder)[right = 2 of newsource] {Encoder};
\draw[->,thick] (newsource.east) --  node [above] {$X_1X_2X_3\cdots$} (encoder.west);
\node[draw,circle,thick] (FIFO) [right = 2.5 of encoder]{\small$ R=1$}; 
\node (bitpipe) at (FIFO.west) {};
\QueueR{bitpipe}{Bitpipe}
\draw[->,thick] (encoder.east) --  node [above] {$10110\cdots$} (Bitpipe);
\node[draw,rectangle,rounded corners,align=center,minimum height=12mm,thick] (decoder)[right = 2 of FIFO] {Decoder};
\draw[->,thick] (FIFO.east) --  node [above] {$10110\cdots$} (decoder.west);
\node[draw,circle,rounded corners,align=center,minimum height=15mm,thick] (monitor)[right = 2 of decoder] {Receiver};
\draw[->,thick] (decoder.east) --  node [above] {$X_1X_2X_3\cdots$} (monitor.west);
\end{tikzpicture}
\caption{System diagram for streaming source coding.}
\label{fig:source_sys}
\negfigspace
\end{figure*}
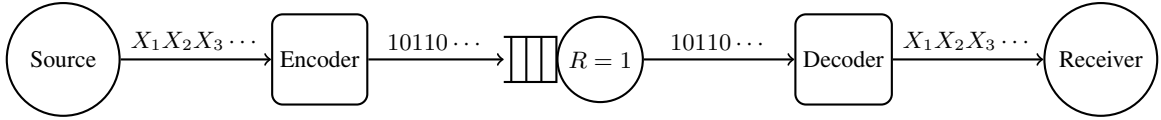

\begin{figure}[t]
\centering
\includegraphics[width=0.45\textwidth]{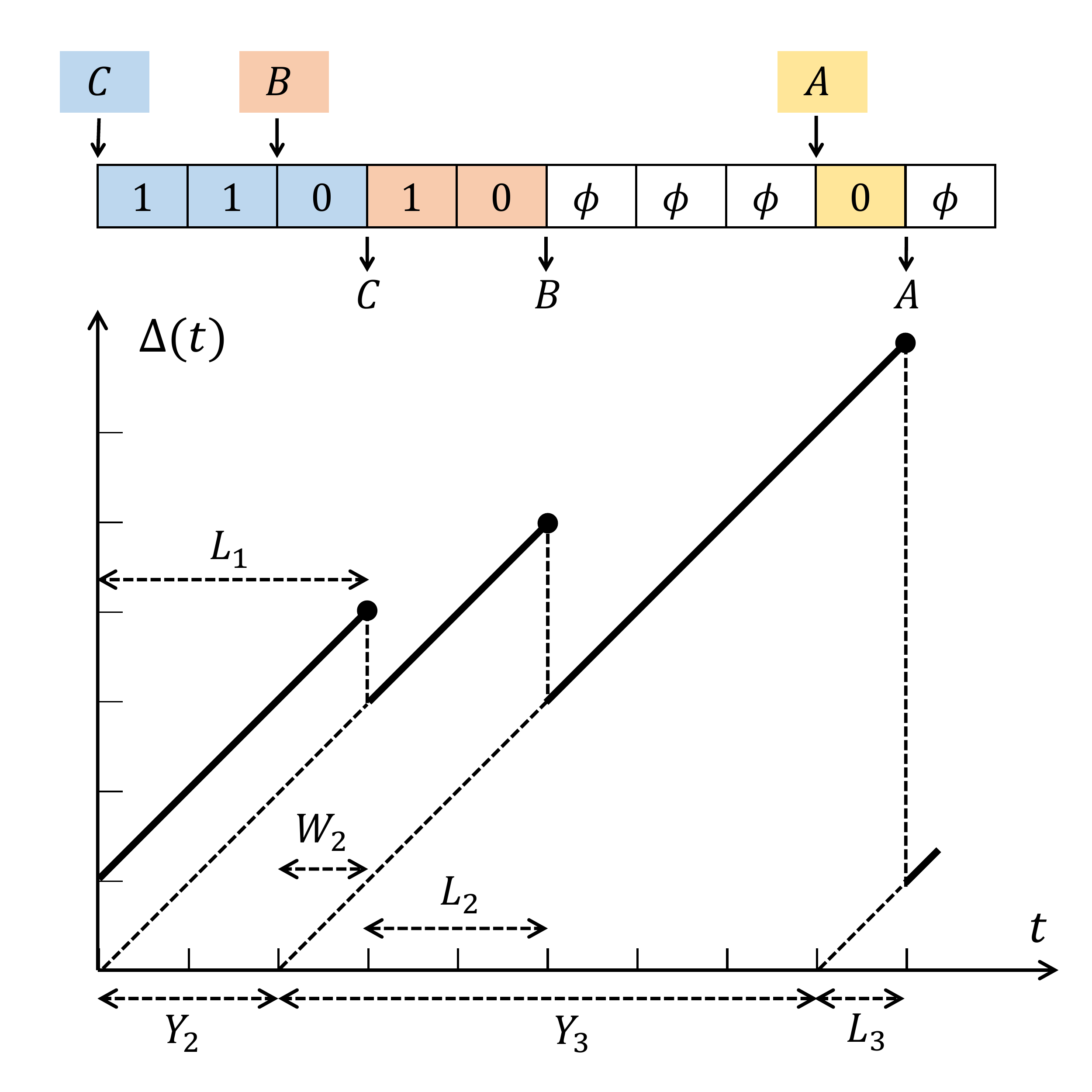}
\caption{An example of the output process of the FIFO buffer and the corresponding age process for the prefix-free lossless coding on random updates. A special symbol $\phi$ is transmitted when the FIFO buffer is empty.}
\label{fig:ideal}
\negspacesmall
\end{figure}

In this paper, we consider the discrete-time streaming source coding system with random arrivals shown in Fig.~\ref{fig:source_sys}.
This system differs from other systems with deterministic symbol arrivals in \cite{Zhong2016,Zhong2017,Mayekar2018}. 
Here we assume a source symbol arrives as a Bernoulli process with probability $q$ at each time unit.
Unlike other status updating systems in which only the freshness of the most recent update matters, here we require the receiver to reconstruct the entire source message stream in a lossless manner.
Our objective is to design a lossless coding scheme that minimizes the average peak age for randomly arriving source symbols.

We start in Sec.~\ref{sec:ideal} with an idealized system model that provides an empty buffer signal to tell the decoder when the channel buffer is empty.
A prefix-free coding scheme is proposed to minimize the average peak age.
In practical settings, however, the source or channel has to encode and the empty buffer state for the decoder. 
We then 
investigate possible encoding schemes for the empty buffer state in Sec.~\ref{sec:protocol}.
We propose a predictive scheme that assigns a codeword to the empty buffer message based on an estimate of the fraction of time the buffer is idle.
A numerical comparison between different empty buffer encoding schemes is provided in Sec.~\ref{sec:evaluate}.

\section{Age Analysis with Empty Buffer Signaling} \label{sec:ideal}

Consider the streaming source coding system shown in Fig.\ref{fig:source_sys}.
In each time slot (starting from $t=1$), the source is either idle or it generates 
a discrete i.i.d.\ symbol $X_k$ from a finite alphabet $\mathcal{X}$. The source is not idle with probability $q$. 
Each symbol $X_k$ has PMF $P_X(x)$ and is time-stamped when it is observed by the encoder.  
Let $N(t)$ denote the number of symbols observed by the encoder by time $t$.

The lossless source encoder $\enc$ maps every symbol into a prefix-free binary sequence, i.e. $\enc: \Xset \to \{0,1\}^*$, and feeds the encoded sequence $\enc(X_i)$ into a first-in-first-out (FIFO) buffer that outputs $R=1$ bit per time unit. 
The capacity of the FIFO buffer is assumed to be infinite.
A symbol $X_k$ is declared at the output of decoder $\dec$ only after the entire bit sequence $\enc(X_k)$ is delivered to the input of $\dec$.
At every time $t$, the decoder reconstructs the source sequence up to $X^{N(u(t))}$, where $u(t)<t$ is the time stamp of the most recent decoded source symbol. 
We note that $u(t)$ is advanced to a new time index only if a new symbol is decoded.
The age of the source sequence $X^{N(u(t))}$ at the receiver at time $t$ is then given by $	\age(t) = t-u(t)$.

In this section, we assume that a special signal $\phi$ is sent through the channel to indicate to the decoder that the buffer was empty and no symbol arrived at the encoder one time unit prior to receiving $\phi$. 
Denoting $L_k=l(X_k)$ as the encoded bit sequence length of a symbol $X_k$, then the sequence $L_k$ is also i.i.d.\ with PMF $P_L(l)$.  

\begin{table}[t]
\centering
\caption{Example of prefix-free codebook with $|\Xset|=4$.}
\label{table:ideal}
\begin{tabular}{|l|c|c|c|c|}
\hline
$X$     & A & B & C & D  \\ \hline
$\enc(X)$ & 0 & 10 & 110 & 111 \\ \hline
\end{tabular}
\end{table}

Fig.~\ref{fig:ideal} depicts an example of the FIFO buffer output process and the age process. 
Source symbols $X\in\{A,B,C,D\}$ arrive at the input of the encoder sequentially, and each symbol is encoded using the prefix-free codebook specified in Table~\ref{table:ideal}.
The first symbol $X_1=C$ arrives at time $t=0$, and the corresponding bit sequence $110$ is fed into the FIFO buffer and output to the decoder after $L_1=3$ time units. 
Thus, the age $\age(t)$ increases linearly from an initial value $\age_0=1$ and drops to $\age(3)=3$ time units at time $t=3$.  
The second symbol $X_2=B$, which arrives at time $t=2$, is deferred by one time unit since the buffer is serving the codeword for the previous symbol $C$, and delivered to the decoder at time $t=5$.
The age is then reset to the waiting time plus the codeword transmission time for symbol $B$.
Afterwards, the buffer stays empty since there is no new arriving symbol.
The instantaneous age $\age(t)$ increases linearly, and is reset to $L_3=1$ time unit only after the decoder receives the codeword for the third symbol $X_3 = A$.

From a queueing perspective, we can view each source symbol $X_k$ as an arriving job to the system.
The service time $S_k$ of the job $X_k$ is then the time it takes to be transmitted to the decoder, which is exactly the length of the encoded sequence $L_k$.
Thus the expected service time is $\E{S} = \E{L} = 1/\mu$.
The job interarrival time $Y_k$ is geometrically distributed with PMF $P_Y(y) = (1-q)^{y-1} q$ for all $k$, and thus the arrival rate is $\lambda = 1/\E{Y} = q$. 
Since the system behaves as a discrete-time Geo/G/1 queue, we have the following claim.

\begin{lemma}
	The queue is stable if and only if $\E{L}<1/q$.
\end{lemma}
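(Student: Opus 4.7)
The plan is to recognize the system as a discrete-time Geo/G/1 queue and invoke Loynes' stability criterion. Let $W_k$ denote the waiting time of the $k$-th arriving symbol in the FIFO buffer before its first bit departs. Since the service time of symbol $k$ is $S_k = L_k$ and the interarrival times $Y_k$ are i.i.d.\ geometric with $\E{Y} = 1/q$, the usual Lindley recursion holds:
\[
W_{k+1} = \max\{0,\; W_k + S_k - Y_{k+1}\}.
\]

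Because $\{S_k\}$ and $\{Y_k\}$ are mutually independent i.i.d.\ sequences, iterating the recursion gives $W_{k+1} \stackrel{d}{=} \max_{0 \le j \le k} \sum_{i=k-j+1}^{k} (S_i - Y_{i+1})$. Loynes' theorem then states that $W_k$ converges in distribution to a proper (a.s.\ finite) random variable if and only if the per-step drift $\E{S - Y} = \E{L} - 1/q$ is strictly negative. This immediately gives the ``if'' direction: when $\E{L} < 1/q$, the queue admits a stationary waiting time distribution and is therefore stable.

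For the converse, suppose $\E{L} \ge 1/q$. If $\E{L} > 1/q$, the strong law of large numbers yields $\sum_{i=1}^{n}(S_i - Y_{i+1}) \to \infty$ almost surely, so $W_k \to \infty$ a.s.\ and no stationary distribution can exist. If $\E{L} = 1/q$, the drift is zero but the underlying random walk is null-recurrent, so $W_k$ diverges in probability, which again rules out stability. No step is genuinely hard in this argument---the result is essentially textbook for GI/G/1 queues---but the borderline case $\E{L} = 1/q$ is the one place where some care is needed beyond a direct appeal to the law of large numbers; the strict inequality in the lemma statement reflects precisely this subtlety.
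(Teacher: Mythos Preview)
Your argument is correct. The paper, however, does not actually prove this lemma: it simply observes that the system is a discrete-time Geo/G/1 queue and states the stability condition as a known fact, implicitly deferring to standard queueing results (and later citing \cite{Meisling1958} for the mean waiting time). So there is no paper proof to compare against in any substantive sense.

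What you have done is supply the argument the paper omits. Casting the buffer dynamics as a Lindley recursion and invoking Loynes' theorem is exactly the standard route for GI/G/1 stability, and it specializes cleanly here because the geometric interarrivals and i.i.d.\ codeword lengths give you independent i.i.d.\ increments $S_k - Y_{k+1}$. Your treatment of the boundary case $\E{L} = 1/q$ is also appropriate: the mean-zero random walk has $\limsup_n \sum_{i=1}^n (S_i - Y_{i+1}) = +\infty$ a.s.\ (Chung--Fuchs, assuming the increment is not degenerate at zero), so the Loynes supremum is infinite and no proper stationary waiting time exists. One minor quibble: calling the walk ``null-recurrent'' is slightly loose terminology for a real-valued random walk; what you really use is recurrence of the walk and hence unboundedness of its running maximum, which is the correct mechanism.
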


Note that the average codeword length $\E{L}$ is lower bounded by the entropy of the source $H(X)$. Hence, it is necessary to have source entropy $H(X)<1/q$ for a stable queue.

We denote $\age_k$ as the $k$-th peak value of the age process $\age(t)$.
The average peak age (peak AoI) at the receiver is then defined as \cite{Costa2014}
\begin{align}
	\paoi = \lim_{K\to\infty} \frac{1}{K} \sum_{k=1}^{K} \age_k.
\end{align}

\begin{theorem}
	For a stable streaming source coding system with code length distribution $P_L(l)$, the PAoI is given by
	\begin{align}
		\paoi & = \frac{\E{L^2}-\E{L}}{2(1/q-\E{L})} + \E{L} + \frac{1}{q}.
	\end{align} \label{thm:paoi}
\end{theorem}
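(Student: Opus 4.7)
The plan is to decompose the $k$-th peak of the age process into elementary queueing quantities and then invoke the mean waiting time of the discrete-time $\mathrm{Geo}/\mathrm{G}/1$ queue.

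First, I would carry out a sample-path analysis of $\age(t)$ following Fig.~\ref{fig:ideal}. Let $t_k$ be the arrival epoch of $X_k$, $W_k$ its waiting time in the FIFO buffer, and $L_k$ its codeword length, so that $X_k$ is fully delivered to the decoder at time $D_k = t_k + W_k + L_k$. Immediately before $D_k$ the freshest decoded symbol is $X_{k-1}$, whose timestamp is $t_{k-1}$, so the $k$-th peak of the age process satisfies
\begin{equation*}
\age_k \;=\; D_k - t_{k-1} \;=\; (t_k - t_{k-1}) + W_k + L_k \;=\; Y_k + W_k + L_k,
\end{equation*}
where $Y_k$ is the geometric interarrival time with mean $1/q$. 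This identity holds uniformly whether or not the buffer was idle between the $(k-1)$-st and $k$-th arrivals: if it was, $W_k=0$ and the age simply grows by $Y_k$ from its floor value; if not, $W_k>0$ captures the residual backlog seen upon arrival.

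Taking the long-run average of both sides, using linearity of expectation together with stability of the queue granted by Lemma~1, I would obtain
\begin{equation*}
\paoi \;=\; \E{Y} + \E{W} + \E{L} \;=\; \frac{1}{q} + \E{L} + \E{W}.
\end{equation*}
Because the system is exactly a discrete-time $\mathrm{Geo}/\mathrm{G}/1$ queue with arrival rate $\lambda = q$ and service-time distribution $P_L(l)$, I would then invoke the Pollaczek--Khinchine-type formula for the mean queueing delay in this setting,
\begin{equation*}
\E{W} \;=\; \frac{q\bigl(\E{L^2}-\E{L}\bigr)}{2\bigl(1-q\E{L}\bigr)} \;=\; \frac{\E{L^2}-\E{L}}{2\bigl(1/q-\E{L}\bigr)},
\end{equation*}
and the claimed formula in Theorem~\ref{thm:paoi} follows by direct substitution.

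The main obstacle is Step 1, namely verifying the peak decomposition $\age_k = Y_k + W_k + L_k$ uniformly across backlog-peaks and idle-period-peaks; the sample-path picture is suggestive, but the bookkeeping for the timestamp of the freshest decoded symbol has to be carried out carefully, together with a remark that edge effects from the first few updates vanish under the Ces\`aro limit. A secondary technical point is to cite (or re-derive) the correct \emph{discrete-time} waiting-time formula rather than the classical continuous M/G/1 analogue: the $-\E{L}$ correction in the numerator is precisely the lattice correction arising from the Bernoulli-arrivals-see-time-averages (BASTA) property, and can be obtained via the standard residual-service argument adapted to discrete time.
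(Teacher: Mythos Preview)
Your proposal is correct and follows essentially the same route as the paper: decompose the $k$-th peak as $\age_k = Y_k + W_k + L_k$, average, and then plug in the discrete-time $\mathrm{Geo}/\mathrm{G}/1$ mean-wait formula $\E{W}=\lambda\E{S(S-1)}/[2(1-\rho)]$ (which the paper cites from Meisling). Your write-up is in fact a bit more explicit about why the decomposition holds and about the lattice correction in $\E{W}$, but the argument is the same.
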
 
\begin{proof}
	Evaluating Fig. \ref{fig:ideal} yields
	\begin{align}
		\paoi 
		& = \lim_{K\to\infty} \frac{1}{K} \sum_{k=1}^{K} (W_k + S_k + Y_k) \nn
		& = \E{W}+\E{S}+ \E{Y}, \label{eqn:EWESEY}
	\end{align}
	where $W_k$ and $S_k$ are the waiting time and service time for source symbol $X_k$ as shown in Fig.~\ref{fig:ideal}.
	We note that the expected waiting time for discrete-time Geo/G/1 queue is given by \cite{Meisling1958}
	\begin{align}
		\E{W} & = \frac{\lambda \E{S(S-1)}}{2(1-\rho)},
	\end{align}
	where $\rho = \lambda/\mu = \E{S}/\E{Y}$ is the system offered load.
	Since $S_k = L_k$ for any $k$, the expected waiting time and service time is rewritten as
	\begin{align}
		\E{W} = \frac{\E{L^2}-\E{L}}{2(1/q-\E{L})}. \label{eqn:EWES}.
	\end{align}
	Theorem \ref{thm:paoi} follows by substituting \eqref{eqn:EWES} into \eqref{eqn:EWESEY}.
\end{proof}

\begin{figure}[t]
\centering
\includegraphics[width=0.45\textwidth]{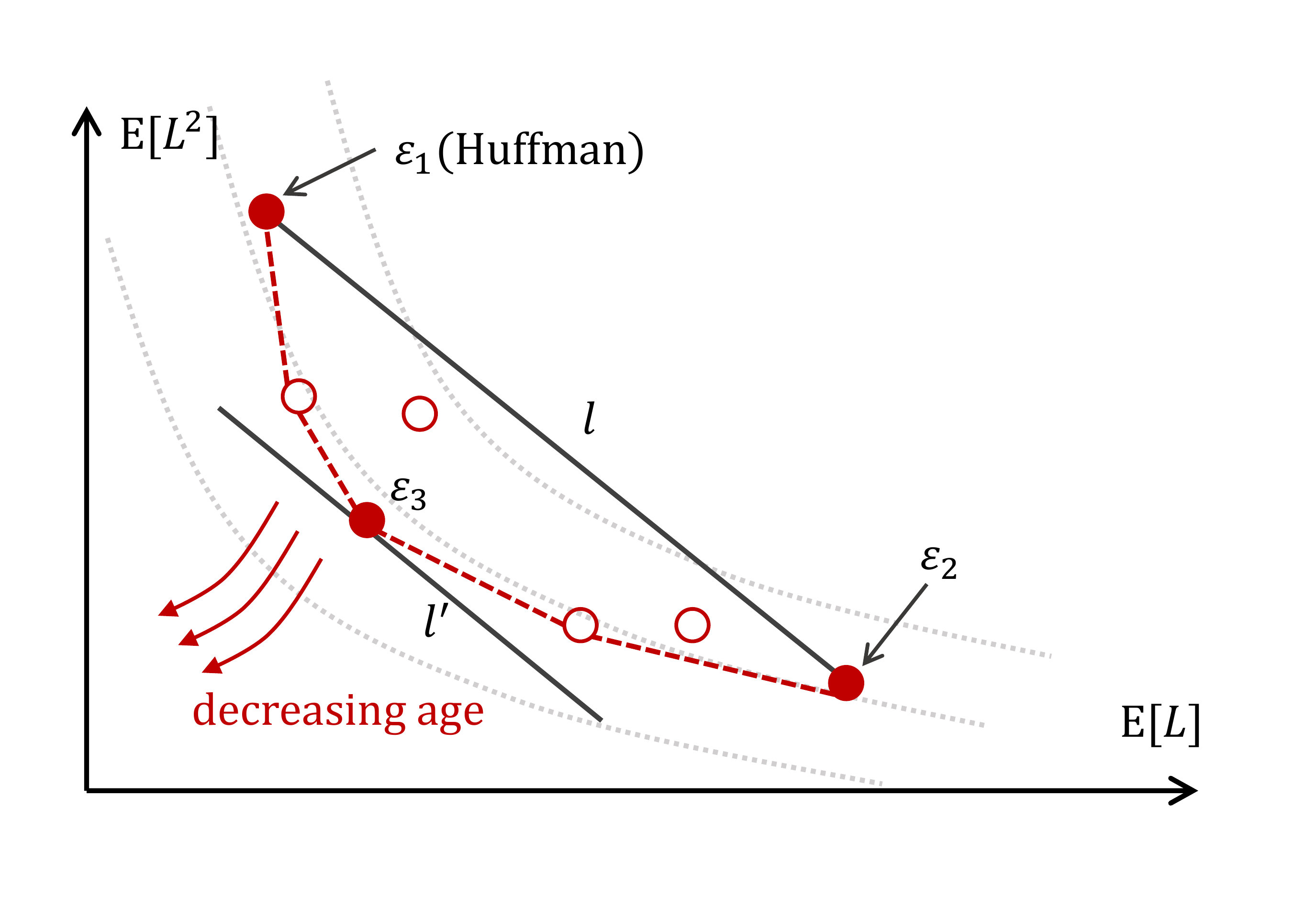}
\caption{The illustration of convex hull algorithm and the representation of codebooks in the coordinate.}
\label{fig:hull}
\negspacesmall
\end{figure}

We observe the PAoI in Thm.~\ref{thm:paoi} is a function of both the average code length $\E{L}$ and the second moment $\E{L^2}$.
This is similar to the bounds on the average age when encoding deterministic arriving source symbols using lossless block-to-variable codes in \cite{Zhong2016}.

\begin{corollary} \label{thm:opt_q}
	For a given source $X$ and encoder $\enc$ with moments of the codeword length $\E{L}$ and $\E{L^2}$, the optimal arrival rate $q^*$ that minimizes the PAoI satisfies
	\begin{align}
		\frac{1}{q^*} = \sqrt{\frac{\E{L^2}-\E{L}}{2}} + \E{L}.
	\end{align}
	The corresponding PAoI is given by
	\begin{align}
		\paoi(q^*) = \sqrt{2(\E{L^2}-\E{L})} + 2\E{L}.
	\end{align}
\end{corollary}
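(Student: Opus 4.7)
The plan is to treat the PAoI expression from Theorem~\ref{thm:paoi} as a one-variable optimization in $1/q$, holding the codeword-length moments $\E{L}$ and $\E{L^2}$ fixed. The stability condition from Lemma~1 restricts the feasible region to $1/q > \E{L}$, where $\paoi$ blows up as $1/q \downarrow \E{L}$ and grows linearly as $1/q \to \infty$, so the minimum is attained in the interior and can be located by setting the derivative to zero.

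To streamline the algebra I would write $v = 1/q$, $a = \E{L^2}-\E{L}$, and $b = \E{L}$, so that Theorem~\ref{thm:paoi} becomes $f(v) = \frac{a}{2(v-b)} + v + b$ on $v > b$. Differentiating gives $f'(v) = 1 - \frac{a}{2(v-b)^2}$, and the stationarity condition $f'(v)=0$ immediately yields $v - b = \sqrt{a/2}$, i.e., $1/q^{*} = \E{L} + \sqrt{(\E{L^2}-\E{L})/2}$, which is the claimed first identity. Since $a \ge 0$ (as $L$ is a positive integer-valued codeword length), $f''(v) = a/(v-b)^3 > 0$ on the stability region, so this stationary point is the unique global minimizer; moreover $v^{*} > b$ automatically, so the optimizer is admissible.

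Substituting $v^{*}$ back into $f$ collapses the first term to $\frac{a}{2\sqrt{a/2}} = \sqrt{a/2}$, which combines with $v^{*} + b = 2b + \sqrt{a/2}$ to give $f(v^{*}) = 2\sqrt{a/2} + 2b = \sqrt{2(\E{L^2}-\E{L})} + 2\E{L}$, matching the second claim.

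There is no substantive obstacle: the PAoI is a convex rational function of $v$ on a half-line, and the first-order condition is a trivial quadratic in $v - b$. The only step that really merits a line of justification is admissibility $v^{*} > b$ (and hence $q^{*} < 1/\E{L}$ so that Lemma~1 still applies), which is immediate because the square-root term is nonnegative; the rest is bookkeeping.
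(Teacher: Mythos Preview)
Your proposal is correct and follows exactly the approach the paper sketches: substitute $z=1/q$, set the derivative of the PAoI expression from Theorem~\ref{thm:paoi} to zero on the stability region $z>\E{L}$, and solve. Your write-up simply fills in the details (convexity, admissibility, and the back-substitution) that the paper leaves implicit in its one-line justification.
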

Corollary \ref{thm:opt_q} follows by letting $z=1/q$ and setting $\partial \age^P / \partial z = 0$ and $\E{L}<z$ in Theorem \ref{thm:paoi}.

Next, we use the technique from \cite{Larmore1989,Baer2006} to obtain the optimal coding scheme that minimizes the PAoI in Theorem \ref{thm:paoi} given an arrival rate $q$.
We refer to this coding scheme as the \emph{age optimal code}.
It was shown in \cite{Larmore1989} that all the possible prefix-free lossless codebooks form a convex hull in a two-dimensional space with bases $\{\E{L}, \E{L^2}\}$.
Fig. \ref{fig:hull} depicts an example of the space with the convex hull formed by all possible codebooks.
The goal is to search all the codebooks at the boundary of the convex hull. 
Since $\paoi$ in Thm. \ref{thm:paoi} is convex in $\E{L}$ and $\E{L^2}$, we perform the search by first defining a linear function
\begin{equation}
f(L)=\alpha \Eop[L] + \beta \Eop[L^2], \label{eqn:linearfunc}
\end{equation}
and vary the parameters $\alpha, \beta\in[0,1]$.  
The problem is then reduced to an inner sub-problem of finding the codebook that minimizes the linear penalty function in \eqref{eqn:linearfunc}.
In \cite{Baer2006}, this sub-problem is shown to be reduced to a coin collector's problem, which can be solved recursively by a Package-Merge algorithm \cite{Larmore1990} in linear space and $O(|\mathcal{X}|^2)$ time. 

Given that the inner sub-problem can be solved efficiently, the outer problem is solvable by an iterative algorithm that starts from two extreme cases: $(\alpha,\beta)=(1,0)$ and $(\alpha,\beta)=(0,1)$. 
We remark that $(\alpha,\beta)=(1,0)$ corresponds to a penalty function that returns a prefix code $\enc_1$ that minimizes the average code length, which is a Huffman code. 
Given any two codebooks $\enc_1$ and $\enc_2$, the values of $\alpha$ and $\beta$ are updated as follows
\begin{align}
\alpha & \leftarrow \Eop[L^2](\enc_1) - \Eop[L^2](\enc_2) \label{eqn:alpha} \\
\beta & \leftarrow \Eop[L](\enc_2) - \Eop[L](\enc_1) . \label{eqn:beta}
\end{align}
Next, we find the optimal code $\enc_3$ that minimizes \eqref{eqn:linearfunc} for the new values of $\alpha$ and $\beta$.
Graphically, this step is equivalent to drawing a line segment $l$ that connects the points corresponding to $\enc_1$ and $\enc_2$, and then searching for the lowest line $l'$ parallel to $l$ that touches the boundary of the convex hull consisting of all possible codebooks. 
If $l'$ lies below $l$, then a new codebook $\enc_3$ is contained in the line $l'$. 
This algorithm repeats iteratively by renewing the value of $\alpha$ and $\beta$ by \eqref{eqn:alpha} and \eqref{eqn:beta} at each step for the line segments connecting any two consecutive codebooks, until we find all the feasible codes at the boundary.
The details of this algorithm can be found in \cite{Larmore1989}.

\section{Encoding the Empty Buffer State} \label{sec:protocol}

\begin{figure}[t]
\centering
\includegraphics[width=0.4\textwidth]{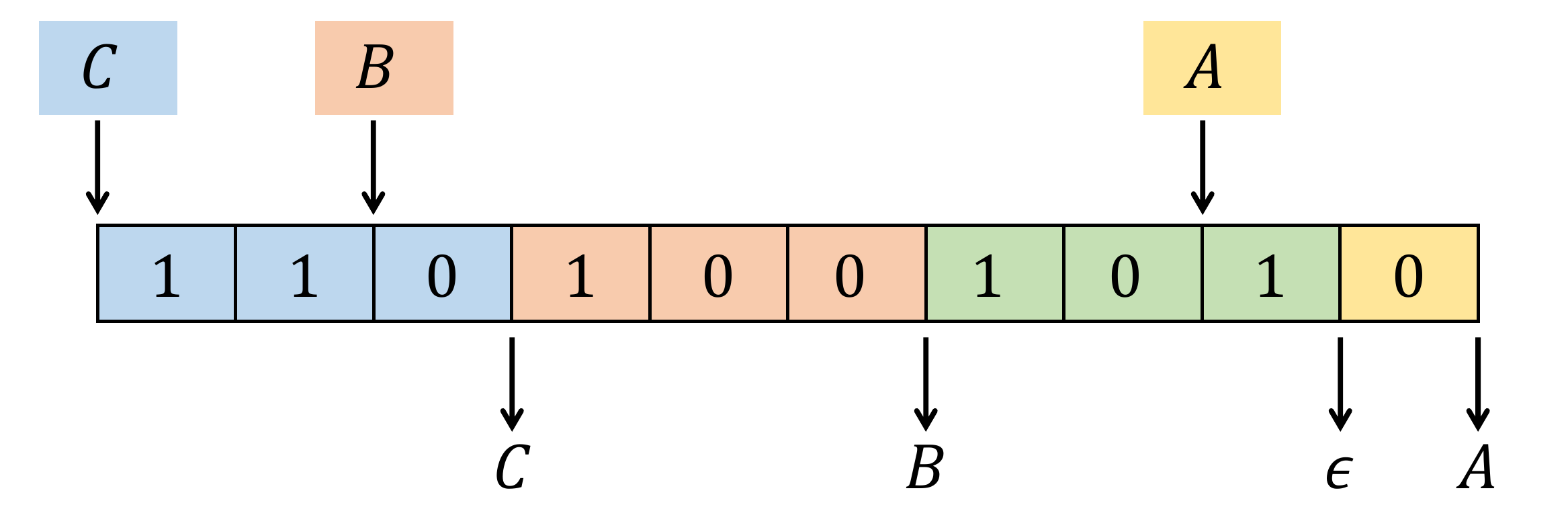}
\caption{An example of the output process of the FIFO buffer in which the special codeword reserved for empty buffer signal $\ns$ is $101$. }
\label{fig:protcol}
\negspacesmall
\end{figure}

\begin{table}[t]
\centering
\caption{Example of codebook with empty buffer symbol $\ns$.}
\label{table:protocol}
\begin{tabular}{|l|c|c|c|c|c|}
\hline
$X$     & A & B & $\ns$ & C & D  \\ \hline
$\enc(X)$ & 0 & 100 & 101& 110 & 111 \\ \hline
\end{tabular}
\end{table}

In this section, we extend the age analysis to a more realistic model where the null symbol $\phi$ is not allowed and the channel can only output either bit 0 or 1. 
In this case, the source has to send a special codeword, which differs from the codewords for the source symbols $X_k$, to inform the decoder when the buffer is empty. 
We refer to this special message as the null symbol $\ns$.
The simplest way is to transmit a single bit ``0'' if the buffer is empty, and otherwise transmit a ``1'' followed by an encoded message.
In this scheme, the ``0'' bit occupies only one time slot and thus doesn't affect the next incoming source symbol.
However, the length of every encoded sequence is increased by 1, and we denote the new length of the message codeword as $L_X = L+1$.
Substituting $L_X$ back into Theorem \ref{thm:paoi} yields the following new PAoI 
\begin{align}
	\paoi_{\textrm{Naive}} & = \frac{\E{L^2}+\E{L}}{2(1/q-\E{L}-1)} + \E{L} +1 + \frac{1}{q}.
\end{align}
We refer to this scheme as the \emph{naive} scheme.
For the naive scheme, the system is stable if and only if $\E{L}+1<1/q$.
That is, for sources with entropy $H(X)\geq1/q-1$, there is no feasible source code for a stable system.

When the system is mostly idle, i.e. $q\ll E[L]$, the buffer has to send the empty state protocol codeword ``0'' frequently. 
Hence, assigning the shortest codeword to $\ns$ is the optimal strategy.
However, when the system is busy all the time, the advantage of a short codeword for empty state will be forfeited since the buffer is overloaded by serving longer codewords for source messages.
Reserving a codeword for an empty buffer state is equivalent to adding a null symbol $\ns$ to the source alphabet in the codebook. 
We then denote the probability of the null symbol used in the codebook as $p_\ns$.
Our objective is to find the optimal $p_\ns$ such that the PAoI is minimized.

Fig.~\ref{fig:protcol} depicts an example of the FIFO buffer output process with the same arrival process as Fig. \ref{fig:ideal} and an alternative empty buffer encoding scheme as shown in Table \ref{table:protocol}.
When the FIFO buffer becomes empty, the codeword $101$ is transmitted to the decoder.
In this case, the codeword corresponding to symbol $A$ is deferred by $1$ time slot since the buffer is busy sending the last bit $1$ of the codeword corresponding to the null symbol $\ns$.

Let $I_\enc$ denotes the fraction of time that the buffer is empty when the codebook $\enc$ is used to compress the source.
One would expect that the choice of $p_\ns$ for encoding the null symbol should be matched to  $I_\enc$. 
However, this is in general not true since $I_\enc$ is the consequence of the encoding scheme $\enc$. 
Consider an example in which the age optimal coding scheme is first applied to generate the codebook for the source symbols, and later the naive scheme is used to include the encoding of the empty buffer state. 
The complete codebook with $\ns$ is denoted by $\enc_1$.
For an infinite length source sequence, the fraction of time that the buffer stays empty is the complement of the offered load, which is denoted by $I_{\enc_1} = 1-q(\E{L}+1)$.
Suppose now the encoder chooses an alternative coding scheme $\enc_2$ that assigns the null symbol $\ns$ with probability $p_\ns = I_{\enc_1}$.
That is, the probability of every source symbol $P(X)$ is scaled by $1-p_\ns$, and the length of the codeword is very likely to be different. 
The changes to both the codeword length and the length of the empty buffer codeword will potentially lead to a new fraction of buffer idle time $I_{\enc_2}\neq I_{\enc_1}$. 

Although it's difficult to obtain the optimal $p_\ns$ that minimizes the PAoI, it would be reasonable to choose $p_\ns$ based on an estimate of the fraction of buffer empty time $I_\enc$.
We propose a simple \emph{predictive} scheme which exploits the buffer offered load $\rho$ when empty buffer signaling is allowed.
The fraction of time for the empty buffer $I_\enc$, which is $1-\rho=1-q\E{L}$ in this case, is then used as $p_\ns$ for encoding.
The detailed procedure for the predictive scheme is shown as follows.
\begin{algo}[Predictive Encoding]
	\qquad 
	\begin{enumerate}
		\item Obtain the PAoI-optimal code $\enc$ assuming empty buffer signaling is allowed, i.e. $\enc\{\ns\} = \phi$. Denote the average codeword length as $\E{L(\enc)}$.
		\item Set the null symbol probability $p_\ns=1-q\E{L(\enc)}$ and set an alternative source PMF $P_{X'}(x) = (1-p_\ns)P_X(x)$ for all $x\in\mathcal{X}$ and $P_{X'}(\ns) = p_\ns$.
		\item Generate the PAoI-optimal codebook for the source $X'$ with PMF $P_{X'}(x)$.
	\end{enumerate}
\end{algo}

When $q\E{L} \ll 1$, $p_\ns$ is large and close to 1, this predictive scheme is identical to the naive scheme since the encoder assigns the most probable codeword to $p_\ns$, which is a single bit $0$ or $1$.
We note that the age analysis for the predictive scheme is relatively complicated since the waiting time of a symbol includes the time waiting for the service of a possible previous null symbol $\ns$ as shown in the example in Fig. \ref{fig:protcol}.
Since every null symbol $\ns$ is inserted in the channel once the buffer becomes empty, the arrival of $\ns$ depends on the buffer state and thus the effective arrival process is not i.i.d..

\section{Evaluations} \label{sec:evaluate}

\begin{figure}[t]
\centering
\includegraphics[width=0.45\textwidth]{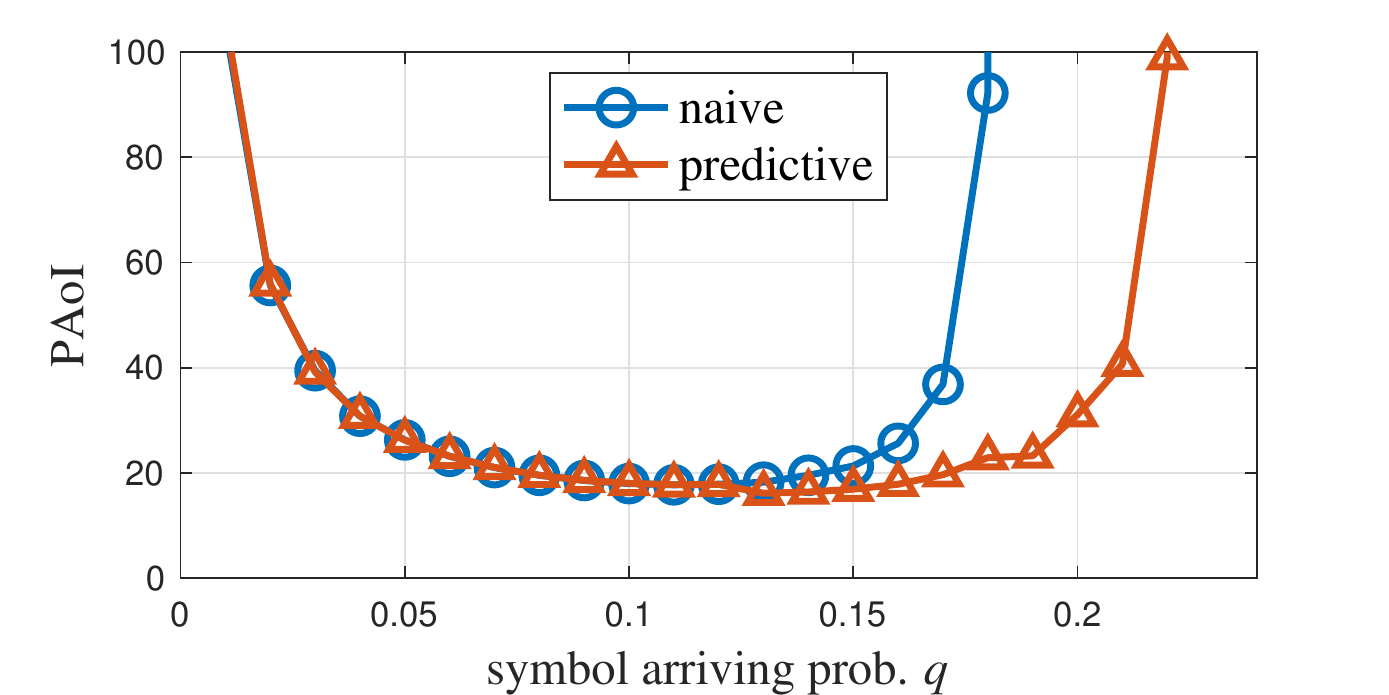}
\caption{PAoI versus arriving probability $q$ for uniform $X$. }
\label{fig:unif}
\negspacesmall
\end{figure}

\begin{figure}[t]
\centering
\includegraphics[width=0.45\textwidth]{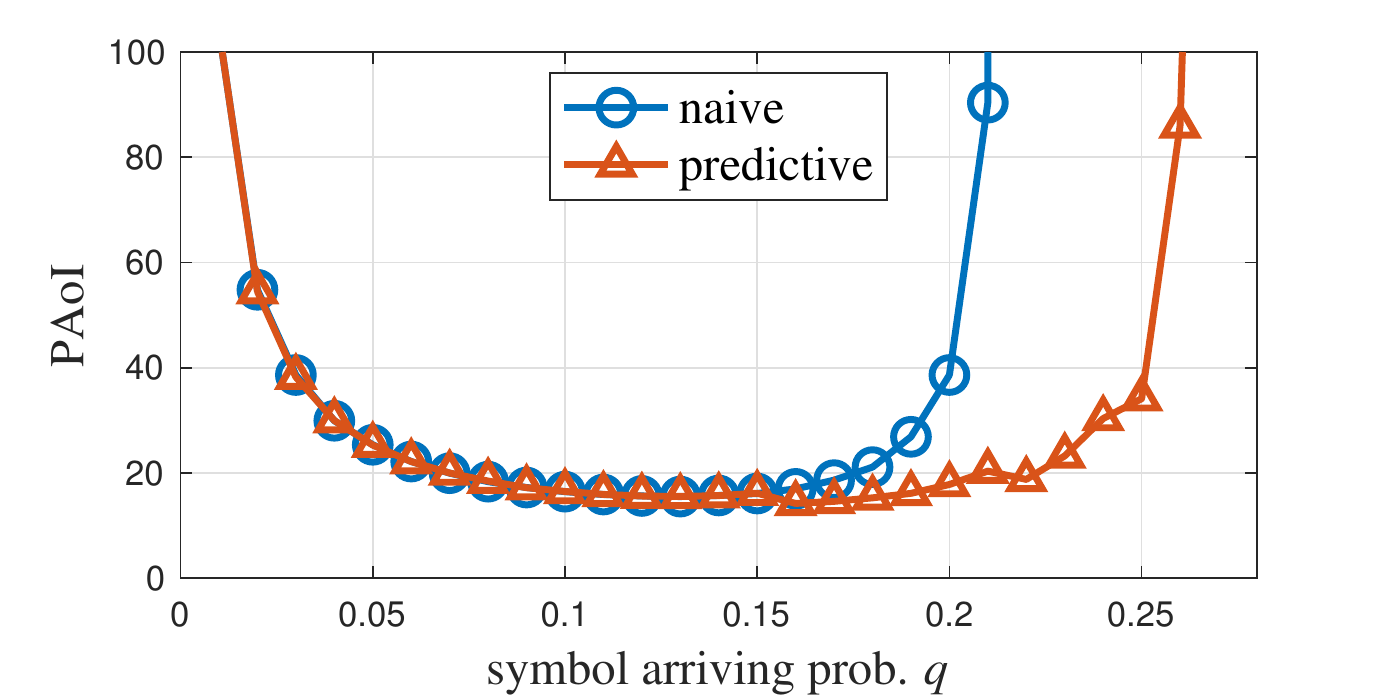}
\caption{PAoI versus arriving probability $q$ for Zipf $X$. }
\label{fig:zipf}
\negspacesmall
\end{figure}

Fig.~\ref{fig:unif} and \ref{fig:zipf} depict PAoI for the two different empty buffer encoding schemes by varying the symbol arriving rate $q$ between $0$ and $1/H(X)$. 
In Fig. \ref{fig:unif}, the source $X$ has $20$ symbols and all the symbols are uniformly distributed with $P_X(x) = 1/20$.  
For any given symbol arrival rate $q$, the naive scheme first finds the age-optimal code when the empty buffer signaling is allowed, and then pads a bit $1$ before sending every message codeword.
When the source arrival rate $q$ is small, both scheme yield large PAoI, and the predictive scheme is identical to the naive scheme as expected since the system is mostly idle.
In this case, the optimal encoding scheme is to assign the shortest codeword to the null symbol $\ns$.
As $q$ increases, the PAoI first decreases and then begins to rise since the system becomes unstable when the average length of message codeword $\E{L}>1/q$.
The curve corresponding to the predictive scheme blows up later than that of the naive scheme when the system load becomes large.  
This is mainly because the predictive scheme assigns a longer codeword to the null symbol $\ns$ and this shortens the average length of the message codeword $\E{L_X}$.

Fig.~\ref{fig:zipf} shows a similar experiment for source $X$ following the Zipf distribution with PMF 
\begin{align*}
	P_X(x) = \frac{1/x^s}{\sum_1^{n}1/x^s},	
\end{align*}
where we set $n=| \Xset | = 20$ and the exponent $s=1$.
Similarly, the predictive scheme is identical to the naive scheme when $q$ is small, and it leads to  lower PAoI when the system load is larger. 

\section{Adaptive Empty Buffer Encoding: Extensions}

We have restricted our attention to encoding the empty buffer state using a prefix-free codeword. 
When the codeword length for the null symbol $\ns$ is larger than 1, any new symbol arriving during the transmission of codeword $\enc(\ns)$ will be backlogged in the buffer. 
This is in fact inefficient since the receiver doesn't have to reconstruct the null symbol $\ns$.
Given that the symbol arriving time is not required at the receiver, it is desired to have an encoding scheme that can preempt the transmission of the null symbol and switch the transmitting codeword when a new symbol arrives.
We show this can be achieved if the codeword for the null symbol $\ns$ shares a common prefix with the codeword for the new symbol.

Fig.~\ref{fig:adaptive} demonstrates an example of preempting the transmission of null symbol $\ns$ adaptively using the prefix-free codebook in Table \ref{table:protocol}.
Starting from $t=0$, no symbol arrives to the encoder and the buffer remains empty.
Thus, the buffer starts sending the codeword $101$ one by one starting from $t=0$.
At $t=2$, the first two bits $10$ is delivered to the decoder and a new symbol $B$ arrives at the same time.
Since the codeword for $B$ is $100$, which shares the common first two bits $10$ with the null symbol $\ns$, the encoder can switch to the transmission of symbol $B$ and send the last bit $0$ in the codeword $\enc(B)=100$.
In this case, symbol $B$ is decoded at $t=3$ and the instantaneous age is then reduced to $1$. 

\begin{figure}[t]
\centering
\centering
\includegraphics[width=0.4\textwidth]{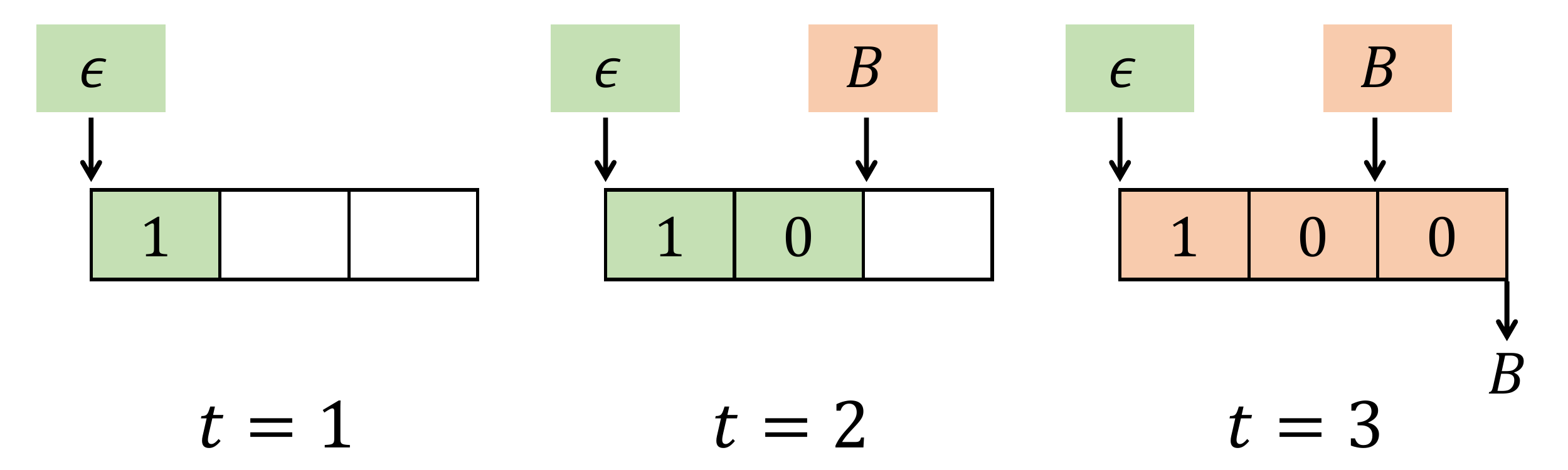}
\caption{An example of adaptive encoding of the empty buffer state.}
\label{fig:adaptive} 
\negspacesmall
\end{figure}

We note that the switch between codewords occurs randomly depending on the probability that $\ns$ and the new symbol are placed in the same branch in the binary code tree. 
Since the null symbol $\ns$ behaves as an estimate of the next arriving symbol, it is expected to assign $\ns$ a long codeword such that it shares a common prefix with most symbols. 
The design of such a coding scheme allowing symbol switching that minimizes the age metric remains as an open problem of interest.

\bibliographystyle{IEEEtran}
\bibliography{refpool}

\begin{thebibliography}{10}
\providecommand{\url}[1]{#1}
\csname url@samestyle\endcsname
\providecommand{\newblock}{\relax}
\providecommand{\bibinfo}[2]{#2}
\providecommand{\BIBentrySTDinterwordspacing}{\spaceskip=0pt\relax}
\providecommand{\BIBentryALTinterwordstretchfactor}{4}
\providecommand{\BIBentryALTinterwordspacing}{\spaceskip=\fontdimen2\font plus
\BIBentryALTinterwordstretchfactor\fontdimen3\font minus
  \fontdimen4\font\relax}
\providecommand{\BIBforeignlanguage}[2]{{%
\expandafter\ifx\csname l@#1\endcsname\relax
\typeout{** WARNING: IEEEtran.bst: No hyphenation pattern has been}%
\typeout{** loaded for the language `#1'. Using the pattern for}%
\typeout{** the default language instead.}%
\else
\language=\csname l@#1\endcsname
\fi
#2}}
\providecommand{\BIBdecl}{\relax}
\BIBdecl

\bibitem{Kaul2012infocom}
S.~Kaul, R.~D. Yates, and M.~Gruteser, ``Real-time status: How often should one
  update?'' in \emph{Proc.\ INFOCOM}, Apr. 2012, pp. 2731--2735.

\bibitem{Costa2014}
M.~Costa, M.~Codreanu, and A.~Ephremides, ``{Age of information with packet
  management},'' in \emph{Proc. IEEE Int.\ Symp.\ Inform.\ Theory}, 2014, pp.
  1583--1587.

\bibitem{Huang2015}
L.~Huang and E.~Modiano, ``{Optimizing age-of-information in a multi-class
  queueing system},'' in \emph{Proc. IEEE Int.\ Symp.\ Inform.\ Theory}, Jun.
  2015, pp. 1681--1685.

\bibitem{Sun2016}
Y.~Sun, E.~Uysal-Biyikoglu, R.~Yates, C.~E. Koksal, and N.~B. Shroff, ``Update
  or wait: How to keep your data fresh,'' in \emph{Proc.\ INFOCOM}, 2016.

\bibitem{Najm2017}
E.~Najm, R.~D. Yates, and E.~Soljanin, ``{Status updates through M/G/1/1 queues
  with HARQ},'' in \emph{Proc. IEEE Int.\ Symp.\ Inform.\ Theory}, 2017.

\bibitem{Bedewy2016}
A.~M. Bedewy, Y.~Sun, and N.~B. Shroff, ``{Optimizing data freshness,
  throughput, and delay in multi-server information-update systems.}''
  \emph{Proc. IEEE Int.\ Symp.\ Inform.\ Theory}, 2016.

\bibitem{Kadota2016}
I.~Kadota, E.~Uysal-Biyikoglu, R.~Singh, and E.~Modiano, ``{Minimizing the Age
  of Information in broadcast wireless networks.}'' \emph{Proc.\ Allerton
  Conf.\ on Commun., Control and Computing}, pp. 844--851, 2016.

\bibitem{Yates2017}
R.~D. Yates, E.~Najm, E.~Soljanin, and J.~Zhong, ``Timely updates over an
  erasure channel,'' in \emph{Proc. IEEE Int.\ Symp.\ Inform.\ Theory}, 2017.

\bibitem{Humblet1978}
P.~A. Humblet, ``Source coding for communication concentrators,'' 1978.

\bibitem{Chang2006}
C.~Cheng and A.~Sahai, ``The error exponent with delay for lossless source
  coding,'' in \emph{IEEE Inf.\ Theory Workshop}, Mar. 2006, pp. 252--256.

\bibitem{Zhong2016}
J.~Zhong and R.~D. Yates, ``Timeliness in lossless block coding,'' in
  \emph{Data Compression Conference (DCC)}, 2016, pp. 339--348.

\bibitem{Zhong2017}
J.~Zhong, R.~D. Yates, and E.~Soljanin, ``{Backlog-adaptive compression: Age of
  information},'' in \emph{Proc. IEEE Int.\ Symp.\ Inform.\ Theory}, 2017, pp.
  566--570.

\bibitem{Mayekar2018}
P.~Mayekar, P.~Parag, and H.~Tyagi, ``Optimal lossless source codes for timely
  updates,'' in \emph{Proc. IEEE Int.\ Symp.\ Inform.\ Theory}, 2018.

\bibitem{Meisling1958}
T.~Meisling, ``{Discrete-Time Queuing Theory},'' \emph{Operations Research},
  vol.~6, no.~1, pp. 96--105, Jan. 1958.

\bibitem{Larmore1989}
L.~L. Larmore, ``{Minimum delay codes},'' \emph{SIAM Journal on Computing},
  1989.

\bibitem{Baer2006}
M.~B. Baer, ``Source coding for quasiarithmetic penalties,'' \emph{IEEE Trans.\
  Inf.\ Theory}, vol.~52, no.~10, pp. 4380--4393, 2006.

\bibitem{Larmore1990}
L.~L. Larmore and D.~S. Hirschberg, ``{A fast algorithm for optimal
  length-limited Huffman codes},'' \emph{Journal of the ACM (JACM)}, vol.~37,
  no.~3, pp. 464--473, Jul. 1990.

\end{thebibliography}

\end{document}